\documentclass{paper}
\usepackage[margin=1in]{geometry}
\pdfoutput=1
\usepackage{mathpazo}
\usepackage{amsmath,amsthm,amssymb}
\usepackage{xspace}
\usepackage{microtype} 
\usepackage{xcolor}
\usepackage{booktabs}

\newtheorem{definition}{Definition}[section]

\newtheorem{theorem}{Theorem}[section]

\newtheorem{axiom}{Axiom}[section]

\usepackage{url}
\usepackage{mathtools}
\mathtoolsset{showonlyrefs,mathic}
\usepackage{cite}
\usepackage{xcolor}
\usepackage{hyperref}
\usepackage{centernot}
\usepackage[normalem]{ulem}
\usepackage{enumitem}
\setlist[enumerate,1]{%
  label=\arabic*.,
}
\newlist{inlinelist}{enumerate*}{1}
\setlist*[inlinelist,1]{%
  label=(\roman*),
}

\usepackage[final,inline,nomargin,index]{fixme}
\fxsetup{theme=colorsig,mode=multiuser,inlineface=\itshape,envface=\itshape}
\FXRegisterAuthor{sv}{asv}{Suresh}
\FXRegisterAuthor{sf}{asf}{Sorelle}
\FXRegisterAuthor{cs}{acs}{Carlos}

\usepackage{graphicx,xspace}

\usepackage{acro}
\DeclareAcronym{ts}{
short  = \text{CS},
long = construct space,
short-format = \scshape
}
\newcommand{\truespace}{\ensuremath{\mathcal{CS}}\xspace}

\DeclareAcronym{os}{
  short  = \text{OS},
  long = observed space,
  short-format = \scshape
}
\newcommand{\obsspace}{\ensuremath{\mathcal{OS}}\xspace}

\DeclareAcronym{wd}{
short  = \text{WD},
long = Wasserstein distance,
short-format = \scshape
}

\DeclareAcronym{gw}{
short  = \text{GWD},
long = Gromov-Wasserstein distance,
short-format = \scshape
}

\DeclareAcronym{ds}{
short  = \text{DS},
long = decision space,
short-format = \scshape
}
\DeclareAcronym{rich}{
short = rich,
long = richness
}

\DeclareAcronym{wysiwyg}{
short = wysiwyg,
long = what you see is what you get,
short-format = \scshape,
long-format = \itshape
}
\DeclareAcronym{wae}{
short = wae,
long = we're all equal,
short-format = \scshape,
long-format = \itshape
}
\DeclareAcronym{ifm}{
short  = IFM,
long = individual fairness mechanism,
short-format = \scshape
}
\DeclareAcronym{gfm}{
short  = GFM,
long = group fairness mechanism,
short-format = \scshape
}

\newcommand{\decisionspace}{\ensuremath{\mathcal{DS}}\xspace}

\newcommand{\wae}{\textsf{WAE}\xspace}
\newcommand{\ip}{\text{\textsf{WM}\xspace}}

\newcommand{\task}{\ensuremath{\mathsf{T}}\xspace}

\newcommand{\eps}{\varepsilon}
\newcommand{\emd}{\ensuremath{\mathcal{W}}}
\newcommand{\gw}{\ensuremath{\mathcal{GW}}}
\newcommand{\gs}{\ensuremath{\sigma}}

\newcommand{\bignote}[1]{%
  \begin{center}
    \noindent\colorbox{gray!20}{%
      \begin{minipage}{\columnwidth}%
        \noindent #1%
      \end{minipage}%
    }
  \end{center}
}

\title{On the (im)possibility of fairness\thanks{This research was funded in part by the NSF under grants IIS-1251049, CNS-1302688, IIS-1513651, IIS-1633724, and IIS-1633387.}}
\author{Sorelle A. Friedler\\ {\large Haverford College}\thanks{\url{sorelle@cs.haverford.edu}} \and Carlos Scheidegger\\ {\large University of Arizona}\thanks{\url{cscheid@cscheid.net}} \and Suresh Venkatasubramanian\\{\large University of Utah}\thanks{\url{suresh@cs.utah.edu}}}
\date{} 

\begin{document}

\maketitle
\begin{abstract}
  What does it mean for an algorithm to be fair?
  Different papers use different notions of algorithmic fairness, and although these appear internally consistent, they also seem mutually incompatible.
  We present a mathematical setting in which the distinctions in previous papers can be made formal.
  In addition to characterizing the spaces of inputs (the ``observed'' space) and outputs (the ``decision'' space), we introduce the notion of a \emph{construct space}: a space that captures unobservable, but meaningful variables for the prediction.
  We show that in order to prove desirable properties of the entire decision-making process, different mechanisms for fairness require different assumptions about the nature of the mapping from construct space to decision space.
  The results in this paper imply that future treatments of algorithmic fairness should more explicitly state assumptions about the relationship between constructs and observations.
\end{abstract}

\section{Introduction}
\label{sec:introduction}

Machine learning has embedded itself deep inside many of the decision-making systems that
used to be driven by humans. Whether it's resume filtering for jobs, or
admissions to college, credit ratings for loans, or all components of the
criminal justice pipeline, automated tools are being used to find patterns, make
predictions, and assist in decisions that have significant impact on our lives. 

The ``rise of the machines'' has raised concerns about the fairness of these
processes. Indeed, while one of the rationales for introducing automated
decision making was to replace subjective human decisions by ``objective''
algorithmic methods, a large body of research has shown that machine learning is
not free from the kinds of discriminatory behavior humans display. This area of
\emph{algorithmic fairness} now contains many ideas about how to prevent
algorithms from learning bias\footnote{Here, and in the rest of this paper, we
  will typically use ``bias" to denote discriminatory behavior in society, rather
  than the statistical notion of bias.  Similarly, ``discriminate" will refer to the societal notion.} and how to design algorithms that are
``fairness-aware.''

Strangely though, the basic question ``what does it mean for an algorithm to be
fair?'' has gone under-examined.  While many papers have proposed quantitative measures of fairness,
these measures rest on unstated assumptions about fairness in society. As we
shall show, these assumptions, if brought into the open, are often mutually
incompatible, rendering it difficult to compare proposals for fair algorithms with each other.

\subsection{Our Work}
\label{sec:our-work}

Definitions of fairness, nondiscrimination, and justice in society have been debated in the social science community extensively, from Rawls to Nozick to Roemer, and many others.  A parallel debate is ongoing within the computer science community, including discussions of individual fairness \cite{Dwork12Fairness} vs. group fairness (e.g., disparate impact's four-fifths rule \cite{Feldman2015DisparateImpact} and a difference formulation of a discrimination score \cite{Calders10NaiveBayes, Kamiran09Classifying, Kamishima11Fairness, Ruggieri2014TCloseness, icml2013_zemel13}).  These discussions reveal differences in the understood meaning of ``fairness" in decision-making centering around two different interpretations of the extent to which factors outside of an individual's control should be factored into decisions made about them and the extent to which abilities are innate and measurable.  This tension manifests itself in the debates between ``equality of outcomes" and ``equality of treatment" that have long appeared under many names.  Our contribution to this debate will be to make these definitions mathematically precise and reveal the axiomatic differences at the heart of this debate.  (We will review the literature in light of these definitions in Section \ref{sec:related}.)

In order to make fairness mathematically precise, we tease out the
difference between \emph{beliefs} and \emph{mechanisms} to make clear what
aspects of this debate are opinions and which choices and policies logically
follow from those beliefs.
Our goal is to make the embedded value systems
transparent so that the belief system can be chosen, allowing mechanisms to
be proven compatible with these beliefs.

We will create this separation of concerns by developing a mathematical theory of
fairness in terms of transformations between different kinds of spaces. Our primary insight can be summarized as:
\begin{quote}\emph{
  To study algorithmic fairness is to study the interactions between different
  \emph{spaces} that make up the decision pipeline for a task. 
}\end{quote}
We will first argue that there are more spaces that are implicitly involved in the
decision-making pipeline than are typically specified. In fact, it is the
\emph{conflation} of these spaces that leads to much of the confusion
and disagreement in the literature on algorithmic fairness.
Next, we will reinterpret notions of fairness, structural bias, and non-discrimination as quantifying the way that spaces are transformed to each other.
With this framework in place, we can make formal tensions between fairness and non-discrimination by revealing fundamental differences in worldviews underlying these definitions.

Our specific contributions are as follows:
\begin{itemize}
\item We introduce the idea of (task-dependent) spaces that interact in any
  learning task, specifically introducing the \acl{ts} that captures the notion that features of interest for decision-making are necessarily imperfect proxies for the construct of interest.
\item We reinterpret notions of fairness, structural bias, and non-discrimination mathematically as functions of transformations between these spaces.
\item Surprisingly, we show that fairness can be guaranteed only with very
  strong assumptions about the world: namely, that ``what you see is what you
  get,'' i.e., that we can correctly measure individual fitness for a task regardless
  of issues of bias and discrimination. We complement this with an \emph{impossibility} result, saying that if
  this strong assumption is dropped, then fairness can no longer be guaranteed. 
\item We develop a theory of non-discrimination based on a quantification of structural bias.  Building non-discriminatory decision algorithms is shown to require a different worldview, namely that ``we're all equal," i.e., that all groups are assumed to have similar abilities with respect to the task in the construct space.  
  \item We show that virtually all methods that propose to address algorithmic
  fairness make implicit assumptions about the nature of these spaces and how
  they interact with each other. 
\end{itemize}


\section{Spaces: Construct, Observed and Decision}
\label{sec:spac-indiv-decis}

If we consider our guiding informal 
understanding of
fairness - that similar people should be treated similarly - in the context of
algorithm design, we must begin by determining how people will be represented as
inputs to the algorithm, and what associated notion of similarity on this representation is appropriate.  These two choices will entirely determine what we mean by fairness, and there are many subtle choices that must be made in this determination as we build up to a formal definition.  Fairness-aware algorithms (and indeed all algorithms in machine learning) can be viewed as mappings between spaces, and we will adopt this viewpoint. They take inputs from some \emph{feature space} and return outputs in a \emph{decision space}.  The question then becomes how we should define points and the associated metric to precisely define these spaces.  To illuminate some of the subtleties inherent in these choices, we'll introduce what will be a running example of fairness in a college admissions decision.

\bignote{\textbf{Example: Setting up a College Admissions Decision}\\
We can think of the college admissions process as being a procedure that takes a set of people and applies a yes or no decision to each person.  Before determining a fair procedure, an admissions office would need to determine what aspects of a person they wanted to make the decision based on.  This might include potential, intelligence, and diligence.  These choices would lead to other questions:  At what point in a person's life should these aspects of their personality and ability be measured?  Is intelligence set at birth or adaptable and when does this mean it should be measured for a college admissions decision?  How can potential and diligence be accurately measured?
}

In order to define a feature space, we must answer questions about what features should be included and how (and when) they should be measured.  This description illuminates our first important distinction from a common set-up of such a problem: the feature space itself is a representation of a chosen set of possibly hidden or unmeasurable constructs.  Determining which features should be considered is part of the determination of how the decision should be made; representing those constructs in measurable form is a separate and important step in the process.  This distinction motivates our first two definitions.

\begin{definition}[\Acl{ts} (\acs{ts})]
  The \acl{ts} is a metric space $\truespace = (P, d_P)$ consisting of individuals and a distance between them. It is assumed that the distance $d_P$ correctly captures closeness with respect to the task. 
\end{definition}

The \acl{ts} is the space containing the features that we would like to make a decision based on.  These are the ``desired" or ``true" features at the time chosen for the decision, and the ability to accurately measure similarity between people with respect to the task.  This is the space representing the input to a decision-making process, if we had full access to it.

\bignote{\textbf{Example: \Acl{ts} for a College Admissions Decision}\\
Suppose that a college admissions team decided to make admissions decisions based on predicted potential of applicants.  Personal qualities, such as self-control, growth mind-set, and grit, are known to be determining
factors in later success \cite{almlund2011Personality}.  \emph{Grit} is roughly defined as an individual's
ability to demonstrate passion, perseverance and resilience towards their chosen
goal.  In this way, a college admissions decision could attempt to use grit as a feature of ``potential" in the \acl{ts}. 
}

In reality, we might not know these features or even the true similarity between individuals. All we have is what we measure or \emph{observe}, and this leads us to our next definition. 

\begin{definition}[\Ac*{os}]
  The \acl{os} (with respect to \task) is a metric space $\obsspace = (\hat{P}, \hat{d})$. We assume an \emph{observation} process $g : P \to \hat{P}$ that generates an entity $\hat{p} = g(p)$ from a person $p \in \truespace$. 
\end{definition}

\bignote{\textbf{Example: \Acl{os} for a College Admissions Decision}\\
Grit (and other such qualities) are not directly observable: research attempts to measure grit indirectly through proxies such as self-reported surveys \cite{duckworth2007Grit}.  The ability to measure grit and these other qualities precisely is limited \cite{duckworth2015Measurement}. In this setting, the ``amount of grit'' is a feature in
the \acl{ts}: it is hidden from us but appears to have some unknown influence on
the desired predicted outcome. We attempt to infer it through imperfect proxy features, such as a ``survey-based grit score," that lie in the \acl{os}.
}

The final part of a task is a \emph{decision space} of outcomes. 

\begin{definition}[\Ac*{ds}]
  A \acl{ds} is a metric space $\decisionspace = (O, d_O)$, where $O$ is a space of outcomes and $d_O$ is a metric defined on $O$. A task \task can be viewed as the process of finding a map from $P$ or $\hat{P}$ to $O$. 
\end{definition}

\bignote{\textbf{Example: \Acl{ds} for a College Admissions Decision}\\
The \acl{ds} for a college admissions decision consists of the (potentially unobservable, hopefully predictable) information that makes up the final admissions decision.  The decision space might be simply the resulting yes/no (admit / don't admit) decisions, or might be the predicted potential of an applicant or their predicted performance in college (a threshold could then be applied to this \acl{ds} to generate yes/no decisions).
}

\paragraph{How the spaces interact.}
\label{sec:inter-betw-spac}

Algorithmic decision-making is a set of mappings between the three spaces
defined above. The desired outcome is a mapping from \truespace to
\decisionspace via an unknown and complex function $o = f(X_1, X_2, \ldots)$ of features
that lie in the \acl{ts}.

In order to implement an algorithm that predicts the desired outcome, we must
first extract usable data from \truespace: this is a collection of  mappings from \truespace to
\obsspace. The features $Y_1, Y_2, \ldots, Y_\ell$ in \obsspace  might be: 
\begin{itemize}
\item noisy variants of the $X_i: Y_i = g(X_i)$ where $g(\cdot)$ is some
  stochastic function, 
\item some unknown (and noisy) combination of $X_i: Y_i = g(X_{i_1}, X_{i_2},
  \ldots)$, or 
\item new attributes that are independent of any of the $X_i$. 
\end{itemize}
Further, some of the $X_i$ might even be omitted entirely when generating $Y_i$. 

Our goal is (ideally) to determine $o$.
We instead design an algorithm that learns $\tilde{o} = \tilde{f}(Y_1, Y_2, \ldots
Y_\ell)$ i.e a mapping from \obsspace to \decisionspace. The  hope is that
$\tilde{o} \simeq o$. 

\subsection{Examples}
\label{sec:examples}

The easiest way to understand the interactions between the \acl{ds}, the
\acl{ts} and the \acl{os} is to start with a prediction task, posit features
that seem to control the prediction, and then imagine ways of \emph{measuring}
these features. We provide a number of such examples in Table~\ref{tab:tsos}, described in more detail below.

\begin{table}[htbp]
  \centering
\begin{tabular}{ccc}
\toprule
\Acl{ds} & \Acl{ts} & \Acl{os} \\
\midrule
Performance in college &Intelligence & IQ \\
Performance in college &Success in High School & GPA\\
Recividism&Propensity to commit crime & Family history of crime\\
Recidivism&Risk-averseness & Age\\
Employee Productivity &Knowledge of job & Number of Years of Experience\\
\bottomrule
\end{tabular}
\caption{Examples of \acl{ts} attributes and their corresponding \acl{os}
  attributes for different outcomes\label{tab:tsos}}
\end{table}

\textbf{College Admission.}
Universities consider a number of factors when deciding who to admit to their
university. One admissions goal might be to determine the likelihood that an admitted student
will be successful in college, and the factors considered can include things like intelligence, high school
GPA, scores on standardized tests, extracurricular activities and so on.  In this example, 
performance in college is the \acl{ds}, while intelligence and success in high school are in the \acl{ts}.    Intelligence might be represented in the \acl{os} by the result of an IQ test, while success in high school could be observed by high school GPA.

\textbf{Recidivism Prediction.} When an offender is eligible for
parole, judges assess the likelihood that the offender will re-offend after
being released as part of the parole decision.  Many jurisdictions now use automated prediction methods like
\textsc{COMPAS}\cite{compas} to generate such a
likelihood. With the goal of predicting the likelihood of recidivism (the \acl{ds}), such an algorithm might want to determine an individual's propensity for criminal activity and their level of risk-aversion.  These \acl{ts} attributes could be modeled in the \acl{os} by a family history of crime and the offender's age.

\textbf{Hiring.} One of the most important criteria in
hiring a new employee is their ability to succeed at a future job. As proxies,
employers will use features like the college attended, GPA, previous work
experience, interview performance and their overall resume.  The \acl{ds} in this case is employee productivity once hired, while a \acl{ts} attribute is the applicant's current knowledge of the job.  One way to observe this knowledge (an attribute in the \acl{os}) is by their number of years of experience at a similar job.

\subsection{Quantifying transformations between spaces}
\label{sec:comparing-spaces}

We can describe the entire pipeline of algorithmic decision-making (feature
extraction and measurement, prediction algorithms and even the underlying
predictive mechanism) in the form of transformations between spaces. 
This is where the \emph{metric} structure of the spaces plays a role. As we will
see, we can express the quality of the various transformations between spaces in
terms of how distances (that capture dissimilarity between entities)
change when one space is transformed into another. The reason to use (functions
of) distances to
compare spaces is because most learning tasks rely heavily on the underlying distance
geometry of the space. By measuring how the distances change relative to
the original space we can get a sense of how the task outcomes might be
distorted.

We introduce two different approaches to quantifying these transformations: one
that is more ``local'' and one that compares how \emph{sets} of points are
transformed. We start with a standard measure of point-wise transformation
cost. 

\begin{definition}[(additive) Distortion]
Let $(X, d_X)$ and $(Y, d_Y)$ be two metric spaces and let $f : X \to Y$ be a map from $X$ to $Y$. The \emph{distortion} $\rho_f$ of $f$ is defined as the smallest value such that for all $p, q\in X$
\[ |d_X(p,q) -  d_Y(f(p), f(q))| \le \rho_f\]
The distortion $\rho(X, Y)$ is then the minimum achievable distortion $\rho_f$
over all mappings $f$. 
\end{definition}

\paragraph{Notes.}
While the above notion (and its multiplicative variant) is standard in the theoretical computer science
literature, it is helpful to understand why it is justified in the context of
algorithmic fairness. Distortion is commonly used as way to minimize the change of
geometry when doing dimensionality reduction to make a task more efficiently
solvable \footnote{It is possible to take a more information-theoretic perspective on the nature of
transformations between spaces. For example, we could quantify the quality of a
(stochastic) transformation by the amount of mutual information between the
source and target spaces. While this is relevant when we wish to determine the
degree to which we can reconstruct the source space from the target, it is not
necessary for algorithmic decision-making. For example, it is not necessary that
we be able to reconstruct the \acl{ts} features from features in the
\acl{os}. But it \emph{will} matter that individuals with similar features in the
\acl{ts} have similar features in the \acl{os}.}. The specific measure above is
a special case $(p = \infty)$ of a general $\ell_p$-additive distortion (where
the norm is computed over the vector of distance differences). In this more
general setting there are a number of approximation algorithms for estimating
$\rho$ when the ``target space'' $Y$ is restricted to a line \cite{dhamdhere2004approximating}, a tree\cite{agarwala1998approximability}, or an ultrametric\cite{farach1995robust}.

There are many different ways to compare metric spaces using their
distances. Distortion is a worst-case notion: it is controlled by the worst-case
spread between a pair of distances in the two spaces. If instead we wished to
measure distances between subsets of points in a metric space, there is a more
appropriate notion.

\begin{definition}[Coupling Measure]
Let $X, Y$ be sets with associated probability measures $\mu_X, \mu_Y$. A
probability measure $\nu$ over $X \times Y$ is a \emph{coupling measure} if
$\nu(X, \cdot)$ (the projection of $\nu$ on $X$) equals $\mu_X$, and similarly
for $\nu(\cdot, Y)$ and $\mu_Y$. The space of all such coupling measures is
denoted by $\mathcal{U}(X, Y)$.
\end{definition}

\begin{definition}[\Ac*{wd}]
  Let $(X, d)$ be a metric space and let $Y, Y'$ be two subsets of $X$. Let
  $\mu$ be a probability measure defined on $X$, which in turn induces
  probability measures $\mu_Y, \mu_{Y'}$ on $Y, Y'$ respectively. 

  The \acl{wd} between $Y, Y'$ is given by 
\[ \emd_d(Y, Y') = \min_{\nu \in \mathcal{U}(Y, Y')]} \int d(y, y') \nu(y, y') \]
\end{definition}

The \ac{wd} finds an optimal transportation between the two sets and computes
the resulting distance. It is a metric when $d$ is. 

Finally, we need a metric to compare subsets of points that lie in \emph{different} metric
spaces. Intuitively, we would like some distance function that determines
whether the two subsets have the same \emph{shape} with respect to the two
underlying metrics. We will make use of a distance function called the
\emph{Gromov-Wasserstein distance} \cite{gw} that is derived from the \acl{wd}
above.

\begin{definition}[\Ac*{gw}]
  Let $(X, d_X), (Y, d_Y)$ be two metric spaces with associated probability
  measures $\mu_X, \mu_Y$. The \acl{gw} between $X$ and $Y$ is given by 
  \[ \gw(X, Y) = \frac{1}{2} \inf_{\nu \in \mathcal{U}(X, Y)} \int \int |d_X(x, x')
  - d_Y(y, y')|d\mu_X\times d\mu_X d\mu_Y\times d\mu_Y \]
\end{definition}

Intuitively, the \ac{gw} computes the \ac{wd} between the sets of \emph{pairs}
of points, to determine whether the two point sets determine similar sets of
distances. 

We note that both $\emd(X,Y)$ and $\gw(X,Y)$ can be computed using the
standard Hungarian algorithm for optimal transport. $\emd(X,Y)$ can be computed
in time $O(n^3)$ (where $|X| = |Y| = n$) and $\gw(X,Y)$ can be computed in time
$O(n^6)$. 

\section{A mathematical formulation of fairness and bias}
\label{sec:definitions}

We have now introduced three spaces that play a role in algorithmic decision-making: the \acl{ts}, the \acl{os} and the \acl{ds}. We have also introduced ways to measure the fidelity with which spaces map to each other. Armed with these ideas, we can now describe how notions of fairness and bias can be expressed formally. 

\subsection{A definition of fairness}

The definition of fairness is task specific, and prescribes desirable outcomes for a task. Since the solution to a  task \emph{is a mapping from the \acl{ts} \truespace to the \acl{ds} \decisionspace}, a definition of fairness should describe the properties of such a  mapping. Inspired by the fairness definition due to Dwork et al.\cite{Dwork12Fairness}, we give the following definition of fairness:

\begin{definition}[Fairness]
  A mapping $f : \truespace \to \decisionspace$ is said to be \emph{fair} if objects that are close in $\truespace$ are also close in \decisionspace. Specifically, fix two thresholds $\epsilon, \epsilon'$. Then $f$ is defined as $(\epsilon, \epsilon')$- fair if for any $x, y \in P$, 
\[ d_P(x,y) \le \epsilon \implies d_O(f(x), f(y)) \le \epsilon' \]
\end{definition}
Note that the definition of fairness does not require any particular outcome for entities that are far apart in \truespace. 

\subsection{A worldview: what you see is what you get}
The presence of the \acl{os} complicates claims that data-driven decision making can be fair, since features in the observed space might not reflect the true value of the attributes that you would like to use to make the decision.  In order to address this complication, given that the \acl{ts} is unobservable, assumptions must be introduced about the points in the \acl{ts}, or the mapping between the \acl{ts} and \acl{os}, or both.

One worldview focuses on the mapping between the \acl{ts} and \acl{os} by asserting that the \acl{ts} and \acl{os} are \emph{essentially the same}. We call this worldview the \ac{wysiwyg} view. 

\begin{axiom}[\ac{wysiwyg}]
There exists a mapping $f \colon \truespace \to \obsspace$ such that the distortion $\rho_f$ is at most $\epsilon$ for some small $\epsilon > 0$. Or equivalently, the distortion $\rho$ between \truespace and \obsspace is at most $\epsilon$
\end{axiom}

In practice, we can think of $\epsilon$ as a very small number like $0.01$. 

\bignote{\textbf{Example: \acs{wysiwyg} in a College Admissions Decision}\\
In the college admissions setting, the \acs{wysiwyg} is the assumption that features like SAT scores and high-school GPA (which are observed) correlate well with the applicant's ability to succeed (a property of the \acl{ts}). More precisely, it assumes that there is some way to use a combination of these scores to correctly compare true applicant ability. 
}

\subsection{A worldview: structural bias}
\label{sec:systemic-bias}

But what if the \acl{ts} isn't accurately represented by the \acl{os}?  In the case of stochastic noise in the transformation between \acs{ts} and \acs{os}, fairness in the system may decrease for all decisions.  This case can be handled using the \acs{wysiwyg} worldview and usual techniques for accurate learning in the face of noise (see, e.g., \cite{Kearns1998Noise}).

Unfortunately, in many real-world societal applications, the noise in this transformation is non-uniform in a societally biased way.  To explain this \emph{structural bias}, we start with the notion of a \emph{group}: a collection of individuals that share a certain set of characteristics (such as gender, race, religion and so on).  These characteristics are often historically and culturally defined (e.g., by the long history of racism in the United States). We represent groups as a partition of individuals into sets $G_1, G_2, \ldots, G_k$. In this work, we will think of a group membership as a characteristic of an individual; thus each of the \acl{ts}, \acl{os}, and \acl{ds} admits a partition into groups, induced by the group memberships of individuals represented in these spaces. 

Structural bias manifests itself in unequal treatment of groups. In order to quantify this notion, we first define the notion of \emph{group skew}: the way in which group (geometric) structure might be distorted between spaces. What we wish to capture is the \emph{relative} distortion of groups with respect to each other, rather than (for example) a scaling transformation that would transform all groups the same way. 

Let $(X, d)$ be a metric space partitioned into groups $\mathcal{X} = \{X_1, \ldots, X_k\}$. Any probability measure $\mu_X$ defined on $X$ induces a measure $\mu_{\mathcal{X}}$ on $\mathcal{X}$ in the natural way. We can define a metric $d_{\mathcal{X}}$ on $\mathcal{X}$ via the operation $d_{\mathcal{X}}(X_i, X_j) = \emd_d(X_i, X_j)$. Now consider two such metric spaces $(X, d_X)$, $(Y, d_Y)$ and their associated group metric spaces $(\mathcal{X}, d_{\mathcal{X}}), (\mathcal{Y}, d_{\mathcal{Y}})$ and measures $\mu_X, \mu_Y$ 

\begin{definition}[Between-groups distance]
The \emph{between-groups distance} between $(X, d_X)$, $(Y, d_Y)$ with measures $\mu_X, \mu_Y$ is 
\[ \rho_b = \frac{\gw(\mathcal{X}, \mathcal{Y})}{\binom{k}{2}} \]
\end{definition}

The between-groups distance treats the groups in a space as individual ``points'', and compares two collections of ``points''. To capture the \emph{differential} treatment of groups, we need to normalize this against a measure of how each group is distorted individually\footnote{This is similar to how we might measure between-group and within-group variance in statistical estimation problems like ANOVA.}. 

\begin{definition}[Within-group distance]
Let $X_i$ and $Y_i$ be the two sets in the spaces $X, Y$ corresponding to the $i^{th}$ group. Let $\rho_i = \gw(X_i, Y_i)$. Then we define 
\[ \rho_w = \frac{1}{k}\sum_{i=1}^k \rho_i \]
\end{definition}

We can now define a notion of \emph{group skew} between two spaces. 

\begin{definition}[Group skew]
Let $(X, d_X)$ and $(Y, d_Y)$ be metric spaces with group partitioning $\mathcal{X}, \mathcal{Y}$ and measures $\mu_X, \mu_Y$. The \emph{group skew} between $\mathcal{X}$ and $\mathcal{Y}$ is the quantity 
\[ \gs(\mathcal{X}, \mathcal{Y}) = \frac{\rho_b(\mathcal{X}, \mathcal{Y})}{\rho_w(\mathcal{X}, \mathcal{Y})} \]
\end{definition}

There is a degenerate case in which group skew is not well-defined. This is when for each $i$ the sets $X_i$ and $Y_i$ are identical in distance structure. In this (admittedly unlikely) setting, each $\rho_i$ will be zero, and thus $\rho_w = 0$. This can be interpreted as saying that when groups are identical in the two spaces, any small variation between groups is magnified greatly. To avoid this degenerate case, we will instead compute $\rho_b$ and $\rho_w$ on a \emph{perturbed} version of the data, where each point is shifted randomly within a metric ball of radius $\delta$. The parameter $\delta$ acts as a \emph{smoothing operator} to avoid such degenerate cases. This effectively adds $O(\delta)$ to each of the numerator and denominator, ensuring that the ratio is always well defined.

Using these definitions, we can now account for structural bias, which can be informally understood as the existence of more distortion between groups than there is within groups when mapping between the \acl{ts} and the \acl{os}, thus identifying when groups are treated differentially by the observation process.

\begin{definition}[Structural Bias]
The metric spaces \acs{ts} $= (X, d_X)$ and \acs{os} $= (Y, d_Y)$ admit $t$-\emph{structural bias} if the group skew $\gs(\mathcal{X}, \mathcal{Y}) > t$.
\end{definition}

\bignote{\textbf{Example: Structural Bias in a College Admissions Decision}\\
Researchers have shown that the SAT verbal questions function differently for the African-American subgroup, so that the validity of the results as a measure of ability are in question for this subgroup \cite{Santelices10UnfairSAT}.  In the case where SAT scores are a feature in the \acl{os}, this research indicates that we should consider these scores to be the result of structural bias.
}

\subsubsection{Non-Discrimination: a top-level goal}
Since group skew is a property of two metric spaces, we can consider the impact of group skew between the \acl{ts} and \acl{os} (structural bias as defined above), between the \acl{os} and the \acl{ds}, and between the \acl{ts} and the \acl{ds}.  While colloquially ``structural bias" can refer to any of these (since the \acl{ts} and \acl{os} are often conflated), in this paper we will give different names to group skew depending on the relevant spaces used in the mapping.  We will refer to group skew in the decision-making procedure (the mapping from \acl{os} to \acl{ds}) as \emph{direct discrimination}.

\begin{definition}[Direct Discrimination]
The metric spaces  $\obsspace = (X, d_X)$ and  $\decisionspace= (Y, d_Y)$ admit $t$-\emph{direct discrimination} if the group skew $\gs(\mathcal{X},\mathcal{Y}) > t$.
\end{definition}
Note that the group structure $\mathcal{Y}$ is the direct result of a mapping $f : \obsspace \to \decisionspace$, so we can think of direct discrimination as a function of this mapping. 

Since group membership is usually defined based on innate or culturally defined characteristics that individuals have no ability to change, it is often considered unacceptable (and in some cases, illegal) to use group membership as part of a decision-making process.  Thus, in decision-making \emph{non-discrimination} is often a high-level goal.  This is sometimes termed ``fairness," but we will distinguish the terms here.

\begin{definition}[Non-Discrimination]
\label{def:non-discrim}
Let  $\truespace= (X, d_X)$ and  $\decisionspace= (Y, d_Y)$. A mapping $f : \truespace \to\decisionspace$ is $t$-nondiscriminatory if the group skew $\rho(\mathcal{X}, \mathcal{Y}) \le t$.

\end{definition}

Thus, this worldview is primarily concerned with achieving non-discrimination by avoiding both structural bias and direct discrimination.  Given the social history of this type of group skew occurring in a way that disadvantages specific sub-populations it makes sense that this is a common top-level goal.  Unfortunately, it is hard to achieve directly, since we have no knowledge of the \acl{ts} and the existence of structural bias precludes us from using the \acl{os} as a reasonable representation of the \acl{ts} (as is done in the \acs{wysiwyg} worldview).

\subsubsection{An Axiomatic Assumption: we're all equal}
Instead, a common underlying assumption of this worldview, that we will make precise here, is that in the \acl{ts} \emph{all groups look essentially the same}. In other words, it asserts that there are no innate differences between groups of individuals defined via certain potentially discriminatory characteristics.  This latter axiom of fairness appears implicitly in much of the literature on statistical discrimination and disparate impact. 

There is an alternate interpretation of this axiom: the groups aren't equal, but for the purposes of the decision-making process they should be treated as if they were.  In this interpretation, the idea is that any difference in the groups' performance (e.g., academic achievement) is due to factors outside their individual control (e.g., the quality of their neighborhood school) and should not be taken into account in the decision making process \cite{Roemer}.  This interpretation has the same mathematical outcome as if the equality of groups is assumed as true, and thus we will refer to a single axiom to cover these two interpretations.

\begin{axiom}[\acl{wae} (\acs{wae})]
Let $\truespace = (X, d_X)$ with measure $\mu_X$ be partitioned into groups $X_1, \ldots, X_k$. 
There exists some $\eps > 0$ such that for all $i, j$,  $\emd_{d_X}(X_i, X_j) < \eps$. 
\end{axiom}

\bignote{\textbf{Example: \acs{wae} in a College Admissions Decision} \\
In the college admissions setting, the \acs{wae} asserts that all groups will have almost the same distribution in the \acl{ts} of intrinsic abilities, such as grit or intelligence, chosen as important inputs to the decision making process.  In the example of SAT scores, given above, this would mean that we assume the structural bias of these scores apparent in the \acl{os} is not representative of a distributional difference in the \acl{ts}.
}

It is useful to note that the \acs{wae} is a property of the \acl{ts}, whereas the \acs{wysiwyg} describes the relation between the \acl{ts} and \acl{os}.  Note also that the definition of structural bias does not itself assume the \acs{wae} - in fact, there could be structural bias that acted in addition to existing true differences between groups present in the \acl{ts} to further separate the groups in the \acl{os}.  However, because of the lack of knowledge about the \acl{ts} when assuming the existence of structural bias, the \acs{wae} will often be assumed in practice under the structural bias worldview.

\subsection{Comparing Worldviews}
\label{sec:comparing-worldviews}

While we introduce these two axioms as different world views or belief systems, they can also be strategic choices.  Whatever the motivation (which is ultimately mathematically irrelevant), the choice in axiom is critical to a decision-making process.  The chosen axiom determines what fairness means by giving enough structure to the \acl{ts} or the mapping between the \acl{ts} and \acl{os} to enforce fairness despite a lack of knowledge of the \acl{ts}.  We discuss the subtleties of the axiomatic choice here and will return to the enforcement of fairness based on this axiomatic choice in the next section.

\begin{figure}[htb]
\begin{center}
\bignote{\textbf{Example: Choice of \acs{wysiwyg} or \acs{wae} for a College Admissions Decision}\\
There are many reasonable ways that a college admissions office might decide to set up a fair decision making procedure.  These would include different choices of what features to be included in the \acl{ts} and might indicate different fairness goals.  They would also necessitate different axiomatic choices.
\begin{enumerate}
\item One decision-making philosophy might be that the college should admit only those students who reach a high level of achievement and demonstrated intelligence at the time of admission.  The \acl{ts} in this example might include potential, intelligence, and diligence \emph{at the time of the admissions decision}.
	\begin{enumerate}
	\item If the admissions office believes that their \acl{os} features accurately represent the \acl{ts} features, this scenario aligns with the \acs{wysiwyg} axiom.
	\item If the admissions office believes that any systemic group differences in the \acl{os} are inaccuracies (e.g., potentially due to culturally inflected exam questions), this scenario follows the \acs{wae} axiom.
	\end{enumerate}
\item Another decision-making philosophy might be that the college should focus on admitting those with high innate potential, regardless of the social environment and life experiences that may have shaped that potential.  In this case, the construct space might include potential, intelligence, and diligence \emph{at birth}.  As above, the admissions office could choose to believe either the \acs{wysiwyg} or the \acs{wae} axioms.
\item A third decision-making philosophy might be that the college admissions process should serve as a social equalizer, so that, e.g., applicants from different class backgrounds are admitted at approximately the same rate.  Since the \acl{ts} is the space of features used for ideal decision-making, in this case potential, intelligence, and diligence might be assumed to represent an idealized belief of the characteristics and abilities of an individual were their class background equalized.  (Some may believe that this case is the same as representing these qualities at the time of birth.)  This would be a choice to follow the \acs{wae} axiom.
\end{enumerate}
}
\end{center}
\end{figure}

The choice of worldview is heavily dependent on the specific attributes and task considered, and on the algorithm designer's beliefs about how observations of these attributes align with the associated constructs.  Roemer identifies the goal of such choices as ensuring that negative attributes that are due to an individual's circumstances of birth or to random chance should not be held against them, while individuals should be held accountable for their effort and choices \cite{Roemer}.  He suggests that differences in worldview can be attributable to \emph{when} in an individual's development the playing field should be leveled and after what point an individual's own choices and effort should be taken into account.  In our decision-making formulation, the decision about when amounts to a decision of which axiom to believe at the point in time the decision will be made.  If the decision is being made while the playing field should be leveled, then the \acl{wae} axiom should be assumed.  If the decision is being made while only an individual's own efforts should be included in the decision, then the \acs{wysiwyg} axiom may be the right choice.

\subsection{Mechanisms}
\label{sec:policies}

We can think of the axioms as assumed relationships between the \acl{ts} and the \acl{os} (or operating within the \acl{ts}), and fairness definitions as desirable outcomes (executions of tasks) that reflect these relationships. A \emph{mechanism} is then a constructive expression of a definition: it is a mapping (or set of mappings) from \obsspace to \decisionspace that allow the definition to be satisfied. In effect, a well-designed mechanism working from a specific set of axioms should yield a fair outcome. 

Formally, a \emph{mechanism} is a mapping $f : \obsspace \to \decisionspace$ that satisfies certain properties. 
First and foremost, a mechanism should be \emph{nontrivial}. For example, if the
decision space is $\{0,1\}$ (e.g for binary classification), a mechanism that
assigned a $0$ to each point would be trivial. 

\begin{definition}[\Acl{rich}]
A mechanism $f : \obsspace \to \decisionspace$ is \emph{\ac*{rich}} if for each $d \in
\decisionspace$, $f^{-1}(d) \ne \emptyset$. 
\end{definition}

There are then two types of mechanisms that (we will show) provide guarantees under the two different world views described above:  \acl{ifm}s (aiming to guarantee fairness) and \acl{gfm}s (aiming to guarantee non-discrimination).

\begin{definition}[\Acl{ifm} (\acs{ifm})]
Fix a tolerance $\epsilon$. A mechanism \acs{ifm}$_\epsilon$ is a \ac*{rich} mapping $f : \obsspace \to \decisionspace$ such that $\rho_f \le \epsilon$.
\end{definition}

The \acl{ifm} asserts that the mechanism for decision making treats people similarly if they are close, and can treat them differently if they are far, in the observed space. 

\begin{definition}[\Acl{gfm} (\acs{gfm})]
  Let $X$ be partitioned into groups $X_1, X_2, \ldots$ as before. A \ac*{rich} mapping $f: $ \acs{os}$ \to $ \acs{ds} is said to be a valid \emph{\acl{gfm}} if all groups are treated equally. Specifically, fix $\epsilon$. Then $f$ is said to be a \acs{gfm}$_\epsilon$ if for any $i, j$, $\emd_{d_O}(X_i, X_j) \le \epsilon$.
\end{definition}

The \acl{gfm} asserts that the decision mechanism should treat all groups the same
way. The doctrine of disparate impact is an example of such an assertion
(although the precise measure of the \emph{degree} of disparate impact as
measured by the $4/5$-rule is different). 

In the following sections we will explore whether and how these types of mechanisms actually guarantee fairness under certain axiomatic assumptions.


\section{Making fair or non-discriminatory decisions}

With the basic vocabulary in place, we can now ask questions about when fairness is possible. An easy first observation is that under the \acs{wysiwyg}, we can always be fair. 

\begin{theorem}
\label{thm:ifm}
  Under the \acs{wysiwyg} with error parameter $\delta$, an \acs{ifm}$_{\delta'}$ will guarantee $(\epsilon, \epsilon')$-fairness for some function $f$ such that $\epsilon' = f(\delta, \delta')$. 
\end{theorem}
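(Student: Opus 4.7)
The plan is to view the end-to-end decision procedure as the composition of two maps and chain their distortion bounds via the triangle inequality.

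First I would unpack the two hypotheses. The \acs{wysiwyg} axiom gives a map $g \colon \truespace \to \obsspace$ with $\rho_g \le \delta$, meaning $|d_P(x,y) - \hat d(g(x), g(y))| \le \delta$ for all $x,y \in P$. The \acs{ifm}$_{\delta'}$ is a map $h \colon \obsspace \to \decisionspace$ with $\rho_h \le \delta'$, meaning $|\hat d(\hat p, \hat q) - d_O(h(\hat p), h(\hat q))| \le \delta'$ for all $\hat p, \hat q \in \hat P$. Since fairness is a property of a map $\truespace \to \decisionspace$, the natural object to evaluate is the composition $F = h \circ g$.

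Next I would chain the bounds. Fix any $\epsilon > 0$ and any $x,y \in P$ with $d_P(x,y) \le \epsilon$. Applying \acs{wysiwyg} first yields
\[
\hat d(g(x), g(y)) \le d_P(x,y) + \delta \le \epsilon + \delta.
\]
Applying the \acs{ifm} bound to the pair $(g(x), g(y)) \in \hat P \times \hat P$ then yields
\[
d_O(F(x), F(y)) = d_O(h(g(x)), h(g(y))) \le \hat d(g(x), g(y)) + \delta' \le \epsilon + \delta + \delta'.
\]
Hence $F$ is $(\epsilon, \epsilon')$-fair with $\epsilon' = \epsilon + \delta + \delta'$, which defines the advertised function $f(\delta, \delta')$ (with implicit dependence on the input threshold $\epsilon$). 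The \acl{rich} condition on the \acs{ifm} plays no role in this direction of the argument — it only rules out trivial maps that would vacuously satisfy fairness — so I would note it in passing but not invoke it in the computation.

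The ``hard'' part here is really conceptual rather than technical: making explicit that fairness, defined on $\truespace \to \decisionspace$, is being certified for the composed map even though the mechanism designer only has access to, and only optimizes distortion on, the \acl{os}. Everything else is a one-line triangle inequality applied twice, and the theorem should be viewed as the sanity check that the \acs{wysiwyg} worldview is exactly the assumption that transports an \acs{ifm}'s observed-space distortion guarantee back to the construct space.
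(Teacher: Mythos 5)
Your proof is correct and takes essentially the same route as the paper's: both arguments simply chain the \acs{wysiwyg} distortion bound with the \acs{ifm} distortion bound via the triangle inequality to get $\epsilon' = \epsilon + \delta + \delta'$. Your write-up is in fact more explicit than the paper's (which works backward from distances in the \acl{os} and leaves ``setting $\epsilon'$ appropriately'' implicit), and your observation that $\epsilon'$ also depends on the input threshold $\epsilon$ is a fair point that the theorem statement glosses over.
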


\begin{proof}
  Two points in the \acl{os} at distance $d$ have a distance in the \acl{ts} between $d-\delta$ and $d+\delta$. Applying the mechanism $\ip_{\delta'}$ yields decision elements that have a distance (in \decisionspace) between $d-\delta-\delta'$ and $d+\delta+\delta'$. Setting $\epsilon'$ appropriately yields the claim. 
\end{proof}

The requirement that we have an individually fair mechanism turns out to be important. 
We start with some background. Fix $\truespace$. Assume that we have two groups $a, b$, and so each
point $p \in \truespace$ has a label $\ell(p) : \truespace \to \{a, b\}$. Let
$\phi : \truespace \to \obsspace$ be the method by which features of $p$ are
``observed''. For simplicity, we assume this map is bijective and so for each $q
\in \obsspace$ there exists $\phi^{-1}(q) = p \in \truespace$. We will abuse
notation and denote the (group) label of $q \in \obsspace$ as $\ell(q) =
\ell(\phi^{-1}(q))$. Let $d_{\obsspace}$ be a metric on $\obsspace$. Let $P
\subset \obsspace$ be a set of points. The \emph{diameter} of $P$ is $\Delta(P) = \max_{x,y\in P} d_{\obsspace}(x,y)$. 
Consider an arbitrary  $f$. 

\begin{theorem}
\label{thm:decisions}
Under the \acs{wysiwyg} with parameter $\epsilon$, for any $\delta, \delta' < 1$ and a rich mechanism $f : \obsspace \to
\decisionspace$ where the \acl{ds} is discrete ($\decisionspace = \{0,1, 2, \ldots, k\}$), $f$ is not $(\delta-\epsilon,\delta')$-fair. 
\end{theorem}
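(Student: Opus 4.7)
The plan is to derive a contradiction from assuming that $f$ is $(\delta-\epsilon,\delta')$-fair, exploiting three ingredients: (i) the discreteness of \decisionspace, (ii) the near-isometry from \truespace to \obsspace given by \wysiwyg, and (iii) the existence of at least two distinct outputs guaranteed by richness.

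\noindent\textbf{Step 1 (Discretization forces equality of labels).} Because $\decisionspace = \{0,1,\ldots,k\}$ with the usual integer metric, any two distinct decisions are at distance at least $1$. Since $\delta' < 1$, the fairness condition $d_O(f(x),f(y)) \le \delta'$ can only be met when $f(x) = f(y)$. Hence, $(\delta-\epsilon,\delta')$-fairness of $f \circ \phi$ would assert the much stronger condition: whenever $d_P(x,y) \le \delta - \epsilon$, we have $f(\phi(x)) = f(\phi(y))$.

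\noindent\textbf{Step 2 (Transfer to \obsspace\ via \wysiwyg).} By \wysiwyg\ with distortion $\epsilon$, $|d_P(x,y) - d_{\obsspace}(\phi(x),\phi(y))| \le \epsilon$. So every pair $p,q \in \obsspace$ with $d_{\obsspace}(p,q) \le \delta - 2\epsilon$ corresponds to preimages with $d_P(\phi^{-1}(p),\phi^{-1}(q)) \le \delta - \epsilon$, and Step~1 forces $f(p) = f(q)$. In other words, $f$ must be constant on every set $P \subset \obsspace$ with diameter $\Delta(P) \le \delta - 2\epsilon$, and more generally on any $(\delta-2\epsilon)$-connected subset.

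\noindent\textbf{Step 3 (Richness produces two distinct labels, and a chain argument closes the loop).} Richness supplies $p_0,p_1 \in \obsspace$ with $f(p_0) \ne f(p_1)$, so $d_O(f(p_0),f(p_1)) \ge 1 > \delta'$. The main obstacle — and the crux of the argument — is relating $p_0$ and $p_1$ back to the fairness threshold despite their possibly large \obsspace-distance. The way to do this is by a chain argument: exhibit a finite sequence $p_0 = q_0, q_1, \ldots, q_m = p_1$ of observed points with $d_{\obsspace}(q_i, q_{i+1}) \le \delta - 2\epsilon$ for every $i$. Each consecutive pair falls under Step~2, so $f(q_i) = f(q_{i+1})$, and transitively $f(p_0) = f(p_1)$, contradicting the choice of $p_0, p_1$. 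The construction of such a chain is the delicate point: it relies on \obsspace\ being populated densely enough that any two points can be linked by $(\delta-2\epsilon)$-steps, which is the natural regime for observation spaces arising from continuous feature measurements. This density/connectedness assumption on \obsspace\ is where the diameter notion $\Delta(P)$ introduced before the theorem plays its role, letting us bound the chain length and close out the contradiction.
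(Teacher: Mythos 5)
Your Steps 1 and 2 are correct and, if anything, more careful than the paper's own bookkeeping: reducing $(\delta-\epsilon,\delta')$-fairness with $\delta'<1$ to ``$f$ must assign equal labels to pairs within $\delta-\epsilon$ in \truespace,'' and then pulling this back to pairs within $\delta-2\epsilon$ in \obsspace via the distortion bound, is exactly the right use of discreteness and \acs{wysiwyg}. The gap is Step 3. You need two points that are simultaneously close in \obsspace and differently labeled, and you obtain them by positing a chain $q_0,\ldots,q_m$ with steps of length at most $\delta-2\epsilon$ joining the two differently labeled points supplied by richness. That chain-connectivity of \obsspace at scale $\delta-2\epsilon$ is not among the hypotheses and cannot be derived from them: if \obsspace consists of two clusters separated by more than, say, $\delta$, then the rich mechanism labeling one cluster $0$ and the other $1$ \emph{is} $(\delta-\epsilon,\delta')$-fair (no pair within the threshold straddles the clusters), so the conclusion is simply false without some density assumption. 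You flag this honestly, but flagging it does not close it; as written the argument does not go through from the stated hypotheses, and the diameter notation $\Delta(P)$ introduced before the theorem is not what rescues it (the paper never uses it in this proof).

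The paper's own argument takes a genuinely different, more local route to the offending pair: it shrinks a ball $B_r(x)$ to the smallest radius $r'$ at which it stops being monochromatic (if no such radius exists, $f$ is not rich), examines the annulus $B_{r'}(x)\setminus B_{r'-\delta}(x)$ where the new label first appears, and extracts there a non-monochromatic ball of radius $\delta/2$, i.e., two points within $\delta$ of each other in \obsspace with different images. This avoids any global chain, at the price of implicitly assuming enough \emph{local} density for such a small non-monochromatic ball to exist in the annulus (the paper's justification ``since $f$ is injective'' is a non sequitur, as $f$ is only assumed rich), so the paper quietly relies on a local cousin of the assumption you made explicit and global. In short: your Steps 1--2 match the paper; your Step 3 substitutes a global connectivity argument for the paper's boundary-localization argument, and that substitution imports an unproved hypothesis without which the theorem fails.
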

\begin{proof}
Fix the metric $d(x,y) =  \mathbf{1}_{x \ne y}$.
Let $\mathcal{B} = B_r(x)$ be a ball of radius $r$ centered at $x \in \obsspace$. We will say
that $\mathcal{B}$ is \emph{monochromatic} if all points in $\mathcal{B}$ have
the same image under $f$. 
Let $r'$ be the smallest value of $r$ such that $B_r(x)$ is not
monochromatic. If such an $r'$ does not exist, then $f$ cannot be rich. Consider
the difference $\Delta = B_r(x) \setminus B_{r-\delta}(x)$, and let
$B_{\delta/2}(y) \in \Delta$ be some ball that is not monochromatic (such a ball must exist since $f$ is injective). 
Pick two points $p,q \in B_{\delta/2}(y)$
that have different images under $f$. But they are at most $\delta$ apart! Any
bijection $\phi$ from $\truespace$ to $\obsspace$ that preserves this distance will thus
ensure that there are two points $\phi^{-1}(p)$ and $\phi^{-1}(q)$ that are
within distance $\delta - \epsilon$ but have a distance of $1 > \delta'$ in $\decisionspace$.
\end{proof}

The essence of the above argument is that a discrete decision space disallows a fair mechanism, and precludes fairness. 
\subsection{Non-discriminatory decisions are possible}

Demographic parity, the disparate impact four-fifths rule, and other measures quantifying the similarity in the outcomes that groups receive in the \acl{ds} are prevalent and associated with many \acl{gfm}s that attempt to guarantee good outcomes under these measures. We will show that such \acl{gfm}s guarantee non-discriminatory decisions.

Recall from Definition \ref{def:non-discrim} that non-discriminatory decisions guarantee a lack of group skew in the mapping between the \acl{ts} and \acl{ds}, i.e., the goal of non-discrimination is to ensure that the \emph{process} of decision-making does not vary based on group membership.  Given that it's not possible to directly measure the \acl{ts} or the mapping between the \acl{ts} and \acl{ds} without assuming the \acs{wysiwyg} axiom, these \acl{gfm}s attempt to ensure non-discrimination through measurements of the \acl{ds}.

Do these \acl{gfm}s succeed?  Not at first glance. Suppose we have two groups $X_1, X_2$ in \truespace that are far apart, i.e $\emd(X_1, X_2)$ is large and suppose also that they are appropriately far apart in their performance on the task. Suppose that because of structural bias, the images of these two groups in the \acl{os} \obsspace are even further apart while keeping the distribution of task performance within each group the same. 

 A \acl{gfm} applied to the \acl{os} will then move these groups, on the whole, to the same smaller portion of the \acl{ds} so that they receive decisions indicating that they are, on the whole, equal with respect to the task.  (Suppose again that the individuals within the group are mapped similarly with respect to each other and the task.)  

Is this decision process non-discriminatory?  No.  While the within-group distortion will remain the same between the \acl{ts} and the \acl{ds}, the between-group distortion will be as large as the separation between $X_1$ and $X_2$ in the \acl{ts}.  Intuitively, we can see this as discriminatory towards the group that performs better with respect to the task in the \acl{ts}, since they are, as a group, receiving worse decisions than less skilled members of the other group (i.e., there has been group skew in their group's mapping to the decision space).

Yet these \acl{gfm}s are in common practice -- why?  First, let's review the assumptions of this scenario.  If the \acs{wysiwyg} axiom is assumed, then guaranteeing fairness is easily achievable, so here we are interested in what to do in the case when the \acs{wysiwyg} axiom is \emph{not} assumed.  Specifically, let's assume that we are worried about the existence of structural bias -- group skew in the mapping between the \acl{ts} and the \acl{os}.  In this scenario, it may make sense to assume the \acl{wae} axiom.  In fact, as we will show now, when the \acl{wae} axiom is assumed \acl{gfm}s can be shown to guarantee non-discrimination.

\begin{theorem}[\Acl{gfm}s guarantee non-discrimination]
Under the \acs{wae}, a \acs{gfm} with parameter $\epsilon'$ guarantees $(\max(\epsilon,\epsilon')/\delta)$-nondiscrimination. 
\end{theorem}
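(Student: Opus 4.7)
The plan is to bound the numerator $\rho_b$ and the denominator $\rho_w$ of the group skew $\sigma(\mathcal{X},\mathcal{Y}) = \rho_b/\rho_w$ separately. The \acs{wae} will control the between-group geometry in \truespace, the \acs{gfm}$_{\epsilon'}$ hypothesis will control the corresponding geometry in \decisionspace, and the $\delta$-smoothing convention introduced just after the definition of group skew will keep $\rho_w$ bounded away from zero.

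For the numerator, recall that $\rho_b = \gw(\mathcal{X},\mathcal{Y})/\binom{k}{2}$, where the ``points'' of $\mathcal{X}$ are the groups $X_i \subset \truespace$ with inter-group distance $\emd_{d_X}(X_i,X_j)$, and the ``points'' of $\mathcal{Y}$ are the corresponding image groups $Y_i \subset \decisionspace$ with inter-group distance $\emd_{d_O}(Y_i,Y_j)$. Under the \acs{wae} every such distance in $\mathcal{X}$ is at most $\epsilon$, and under \acs{gfm}$_{\epsilon'}$ every such distance in $\mathcal{Y}$ is at most $\epsilon'$. I would exhibit the canonical coupling $\nu \in \mathcal{U}(\mathcal{X},\mathcal{Y})$ that pairs $X_i$ with $Y_i$ for each $i$ (with matching mass). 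Under this coupling, each integrand $|d_\mathcal{X}(X_i,X_j) - d_\mathcal{Y}(Y_i,Y_j)|$ is at most $\max(\epsilon,\epsilon')$, because both quantities are non-negative and individually bounded by $\epsilon$ and $\epsilon'$ respectively. Since the infimum in the definition of \ac{gw} can only improve on this choice, $\gw(\mathcal{X},\mathcal{Y}) \le \max(\epsilon,\epsilon')/2$, and therefore $\rho_b \le \max(\epsilon,\epsilon')$.

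For the denominator, I would invoke the $\delta$-perturbation convention: before computing the within-group distances $\rho_i = \gw(X_i,Y_i)$, each point is shifted randomly inside a metric ball of radius $\delta$. This forces $\rho_w \ge \Omega(\delta)$, because the \ac{gw} distance between a finite group and its $\delta$-randomized copy picks up a contribution of order $\delta$ from the local noise; this is precisely the degeneracy-avoiding role the authors ascribe to $\delta$ in the paragraph following the definition of group skew.

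Combining these two estimates yields $\sigma(\mathcal{X},\mathcal{Y}) = \rho_b/\rho_w \le \max(\epsilon,\epsilon')/\delta$, which is the asserted $(\max(\epsilon,\epsilon')/\delta)$-nondiscrimination. The step I expect to require the most care is the lower bound $\rho_w \gtrsim \delta$: the paper's informal remark that smoothing ``adds $O(\delta)$ to each of the numerator and denominator'' skips the actual \ac{gw} computation, and pinning down a clean estimate that holds for an arbitrary group partition is where the real work lies. The numerator estimate, by contrast, is essentially one line: the canonical coupling together with the elementary observation that $|a-b| \le \max(a,b)$ whenever $a,b \ge 0$.
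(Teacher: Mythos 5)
Your proposal is correct and follows essentially the same route as the paper's proof: bound every term in the Gromov--Wasserstein integral for $\rho_b$ by $\max(\epsilon,\epsilon')$ using the \acs{wae} and \acs{gfm} hypotheses, lower-bound $\rho_w$ by the smoothing parameter $\delta$, and take the ratio. The only difference is that you spell out the canonical coupling and flag the $\rho_w \ge \delta$ step as needing care, whereas the paper simply asserts it ``by construction''; your reading of where the informality lies is accurate.
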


\begin{proof}
The \wae ensures that in the \acl{ts} \truespace, all groups are within distance $\epsilon$ from each other under $W_d$. Similarly, a \acl{gfm} ensures that in the \acl{ds} \decisionspace, all groups are within distance $\epsilon'$ from each other. 

Consider now the between-group distance $\rho_b$ between $\truespace$ and \decisionspace. Since all groups are within $\epsilon$ of each other in \truespace and within $\epsilon'$ in \decisionspace, each term in the integral that computes $\gw$ is upper bounded by $\max(\epsilon, \epsilon')$. By construction, the within-group distance $\rho_w$ is lower bounded by the noise parameter $\delta$. Thus, the overall structural bias score $\sigma$ is upper bounded by $\max(\epsilon,\epsilon')/\delta$. 
\end{proof}

Note that this guarantee of non-discrimination holds even under the structural bias worldview, i.e., the theorem makes no assumptions about the mapping from the \acl{ts} to the \acl{os}.  With this theorem, we now have both an axiom and mechanism under which fairness can be achieved, and a corresponding axiomatic assumption and mechanism under which non-discrimination can be achieved.  

\subsection{Conflicting worldviews necessitate different mechanisms}

As we have shown in this section, under the \acs{wysiwyg} worldview fairness can be guaranteed, while under a structural bias worldview non-discrimination can be guaranteed.  Are these worldviews fundamentally conflicting, or do mechanisms exist that can guarantee fairness or non-discrimination under both worldviews?

Unfortunately, as discussed above, the \acs{wysiwyg} appears to be crucial to ensuring fairness: if for example there is structural bias in the decision pipeline, no mechanism can guarantee fairness. 
Fairness can only be achieved under the \acs{wysiwyg} worldview using an \acl{ifm}, and using a \acl{gfm} will be \emph{unfair} within this worldview.

What about non-discrimination?  Unfortunately, a simple counterexample again shows that these mechanisms are not agnostic to worldview.  While \acl{gfm}s were shown to achieve non-discrimination under a structural bias worldview and the \acl{wae} axiom, if structural bias is assumed, applying an \acl{ifm} will cause \emph{discrimination} in the \acl{ds} whether the \acl{wae} axiom is assumed \emph{or not}.  Consider again the two groups $X_1, X_2$ in \truespace with large $\emd(X_1, X_2)$, and again suppose that the images of these two groups in the \acl{os} \obsspace are even further apart, while keeping the distribution of task performance in each group the same.  Now apply an \acl{ifm} to this \acl{os}.  The resulting \acl{ds} contains a large between-group distortion since the group that performed better with respect to the task in the \acl{ts} will have received, on the whole, much better decisions than their original skill with respect to the other group warrants.  These decisions will thus be discriminatory.

Choice in mechanism must thus be tied to an explicit choice in worldview.  Under a \acs{wysiwyg} worldview, only \acl{ifm}s achieve fairness (and \acl{gfm}s are unfair).  Under a structural bias worldview, only \acl{gfm}s achieve non-discrimination (and \acl{ifm}s are discriminatory).


\section{Analyzing Related Work}
\label{sec:related}

This section serves partly as a review of the literature on fairness. But it also serves as a form of ``empirical validation'' of our framework, in that we use our new formalization of what fairness and non-discrimination mean and the underlying assumptions necessitated when attempting to build fair mechanisms in order to  reconsider previous work within this framework.  Broadly, we find that the previous work in fairness-aware algorithms either
\begin{inlinelist}
\item adopt the \acs{wysiwyg} worldview and guarantee fairness while assuming the \acs{wysiwyg} axiom or
\item adopt the structural bias worldview and guarantee non-discrimination while assuming the \acl{wae} axiom.
\end{inlinelist}
A full survey of such work can be found in \cite{Romei13Multidisciplinary,zliobaite2015survey}.  Here, we will describe some interesting representative works from each of the worldviews.

\subsection{WYSIWYG Worldview}
One foundational work that adopts the \acs{wysiwyg} worldview is Dwork et al. \cite{Dwork12Fairness}.  The definition of fairness they introduce is similar to (and inspired) ours -- they are interested in ensuring that two individuals who are similar receive similar outcomes.  The difference from our definition is that they consider outcome similarity according to a distribution of outcomes for a specific individual.  Dwork et al. emphasize that determining whether two individuals are similar with respect to the task is critical, and assume that such a metric is given to them.  In light of the formalization of the \acl{ts} and \acl{os}, we add the understanding that the metric discussed by Dwork et al. is the distance in the \acl{ts}.  In our framework, this metric is not knowable unless \acs{wysiwyg} is assumed (or the specific mapping between the \acl{ts} and \acl{os} is otherwise provided), so we classify this work as adopting the \acs{wysiwyg} worldview.

Additionally, Dwork et al. \cite{Dwork12Fairness} show that when the earthmover distance between distributions of the attributes conditioned on protected class status are small, then their notion of fairness implies non-discrimination (which they measure as \emph{statistical parity}, or a ratio of one between the positive outcome probability for the protected class and that for the non-protected class).  Thus, they show that under an assumption similar to the \acs{wysiwyg} axiom, if an assumption similar to the \acl{wae} axiom is also assumed, then \acl{gfm}s guarantee fairness.  Note that this special case is unusual, since both axiomatic assumptions are made.  A follow-up work by Zemel et al. \cite{icml2013_zemel13} attempts to bridge the gap between these worldviews by adding a regularization term to attempt to enforce statistical parity as well as fairness.

Interestingly, some of the examples in Dwork et al. \cite{Dwork12Fairness} arguing that a particular form of non-discrimination measure (``statistical parity") is insufficient in guaranteeing fairness make an additional subtle assumption about what spaces are involved in the decision-making process.  Their model implicitly assumes that there could be both an observed decision space and a true decision space (a scenario common in the differential privacy literature), while our framework assumes only a single truly observable decision space (as is more common in the machine learning literature).  One example issue they introduce is the ``self-fulfilling prophecy" in which, for example, an employer purposefully brings in under-qualified minority candidates for interviews (the observed decision space) so that no discrimination is found at the interview stage, but since the candidates were under-qualified, only white applicants are eventually hired (the true decision space).  Under our framework, only the final decisions about who to hire make up the single decision space, and so the discrimination in the decision is detected.

Another type of fairness definition is based on the amount of change in an algorithm's decisions when the input or training data is changed.  Datta et al. \cite{Datta2015AdPrivacy} consider ad display choices to be discriminatory if changing the protected class status of an individual changes what ads they are shown.  Fish et al. \cite{Fish2016ConfidenceBased} consider a machine learning algorithm to be fair if it can reconstruct the original labels of training data when noise has been added to the labels for anyone from a given protected class.  Both of these definitions make the implicit assumption that the remaining training data that is not the protected class status or the label is the correct data to use to make the decision.  This is exactly the \acs{wysiwyg} axiomatic assumption.

A recent work by Joseph et al. \cite{Joseph2016Bandits} also contributes a new fairness definition, akin to those introduced in this paper and by Dwork et al., that aims to ensure that worse candidates are never accepted over better candidates as measured with respect to the task.  Their goal is to take these measurements within the \acl{ts} with unknown per-group functions mapping from the \acl{ts} to the \acl{os}.  Joseph et al. aim to learn these per-group functions.  Thus, although their fairness goal focuses on fairness at an individual level, this work serves as a bridge to the structural bias worldview by recognizing that different groups may receive different mappings between the \acl{ts} and the \acl{os}.

\subsection{Structural Bias Worldview}

The field of fairness-aware data mining began with examinations of how to ensure non-discrimination in the face of structural bias.  These \acl{gfm}s often implicitly assume the \acl{wae} axiom and, broadly, share the goal of ensuring that the distributions of classification decisions when conditioned on a person's protected class status are the same for historically disadvantaged groups as they are for the majority.  The underlying implicit goal in many of these papers and associated discrimination measures is non-discrimination as we have defined it in this paper -- a decision-making process that is made based on an individual's attributes in the \acl{ts} and that does not have group skew in its mapping to the \acl{ds}.

The particular formulation of the \acl{gfm} goal has taken many forms.  Let $Pr[C = Yes | G = 0]$ be the probability of people in the minority group receiving a positive classification and $Pr[C= YES | G = 1]$ be the probability of people in the majority group receiving a positive classification.  Much previous work has considered the goal of achieving a low discrimination score \cite{Calders10NaiveBayes, Kamiran09Classifying, Kamishima11Fairness, Ruggieri2014TCloseness, icml2013_zemel13}, where the discrimination score is defined as $Pr[C = YES | G = 1] - Pr[C = YES | G = 0]$.  Since the goal is to bring this difference close to zero, the assumption is that groups should, as a whole, receive similar outcomes.  This reflects an underlying assumption of the \acl{wae} axiom so that similar group outcomes will be non-discriminatory.

Previous work \cite{Feldman2015DisparateImpact} has also created \acl{gfm}s with the goal of ensuring that decisions are non-discriminatory under the disparate impact four-fifths ratio, a U.S. legal notion with associated measure advocated by the E.E.O.C. \cite{eeoc1979}.  Work by Zafar et al. \cite{Zafar2015FairnessConstraints} has used a related definition that is easier to optimize.  The disparate impact four-fifths measure looks at the ratio comparing the protected class-conditioned probability of receiving a positive classification to the majority class' probability: $Pr[C=YES | G = 0] ~/~ Pr[C = 0 | G = 1]$.  Ratios that are closer to one are considered more fair, i.e., it is assumed that groups should as a whole receive similar classifications in order for the result to be non-discriminatory.  Again, this shows that the \acl{wae} axiom is being assumed in this measure.

Many of these works attempt to ensure non-discrimination by modifying the decision algorithm itself \cite{Calders10NaiveBayes, Kamishima11Fairness} while others change the outcomes after the decision has been drafted.  Especially interesting within the context of our definitional framework, some solutions change the input data to the machine learning algorithm before a model is trained \cite{Feldman2015DisparateImpact, icml2013_zemel13, Kamiran09Classifying}.  These works can be seen as attempting to reconstruct the \acl{ts} and make decisions directly based on that hypothesized reality under the \acl{wae} assumption.


\section{Conclusions}
\label{sec:conclusions}

In this paper, we have shown that some notions of fairness are
fundamentally incompatible with each other.
These results might appear discouraging if one hoped for a universal notion of fairness, but we believe they are important.
They force a shift in the focus of the discussion surrounding algorithmic fairness:
without precise definitions of beliefs about the state of the world
and the kinds of harms one wishes to prevent, our results show that it is not possible to
make progress. They also force future discussions of
algorithmic fairness to directly consider the values inherent in assumptions about how the
observed space was constructed, and that such value assumptions
should always be made explicit.

Although the specific theorems themselves matter, it is the
definitions and the problem setup that are the fundamental
contributions of this paper. This work represents a first step towards
fairness researchers using a shared setting, vocabulary, and
assumptions.

\section{Acknowledgements}
We want to thank the attendees at the Dagstuhl workshop on Data, Responsibly for their helpful comments on an early presentation of this work;  special thanks to Cong Yu and Michael Hay for encouraging us to articulate the subtle differences in reasons for choosing a specific worldview and to Nicholas Diakopoulos and Solon Barocas to pointing us to the relevant work on ``constructs'' and inspiring our naming of that space.  Thanks to Tionney Nix and Tosin Alliyu for generative early conversations about this work.  Thanks also to danah boyd and the community at the Data \& Society Research Institute for continuing discussions about the meanings of fairness and non-discrimination in society.

\bibliographystyle{abbrv}
\bibliography{bias}

\begin{thebibliography}{10}

\bibitem{agarwala1998approximability}
R.~Agarwala, V.~Bafna, M.~Farach, M.~Paterson, and M.~Thorup.
\newblock On the approximability of numerical taxonomy (fitting distances by
  tree metrics).
\newblock {\em SIAM Journal on Computing}, 28(3):1073--1085, 1998.

\bibitem{almlund2011Personality}
M.~Almlund, A.~L. Duckworth, J.~J. Heckman, and T.~D. Kautz.
\newblock Personality psychology and economics.
\newblock Technical Report w16822, NBER Working Paper Series. Cambridge, MA:
  National Bureau of Economic Research., 2011.

\bibitem{Calders10NaiveBayes}
T.~Calders and S.~Verwer.
\newblock Three na{\"\i}ve {Bayes} approaches for discrimination-free
  classification.
\newblock {\em Data Min Knowl Disc}, 21:277--292, 2010.

\bibitem{Datta2015AdPrivacy}
A.~Datta, M.~C. Tschantz, and A.~Datta.
\newblock Automated experiments on ad privacy settings: A tale of opacity,
  choice, and discrimination.
\newblock {\em Proceedings on Privacy Enhancing Technologies}, 1:92 -- 112,
  2015.

\bibitem{dhamdhere2004approximating}
K.~Dhamdhere.
\newblock Approximating additive distortion of embeddings into line metrics.
\newblock In {\em Approximation, Randomization, and Combinatorial Optimization.
  Algorithms and Techniques}, pages 96--104. Springer, 2004.

\bibitem{duckworth2007Grit}
A.~L. Duckworth, C.~Peterson, M.~D. Matthews, and D.~R. Kelly.
\newblock Grit: Perseverance and passion for long-term goals.
\newblock {\em Journal of Personality and Social Psychology}, 92(6):1087--1101,
  2007.

\bibitem{duckworth2015Measurement}
A.~L. Duckworth and D.~S. Yeager.
\newblock Measurement matters: Assessing personal qualities other than
  cognitive ability for educational purposes.
\newblock {\em Educational Researcher}, 44(4):237--251, 2015.

\bibitem{Dwork12Fairness}
C.~Dwork, M.~Hardt, T.~Pitassi, O.~Reingold, and R.~Zemel.
\newblock Fairness through awareness.
\newblock In {\em Proc. of Innovations in Theoretical Computer Science}, 2012.

\bibitem{farach1995robust}
M.~Farach, S.~Kannan, and T.~Warnow.
\newblock A robust model for finding optimal evolutionary trees.
\newblock {\em Algorithmica}, 13(1-2):155--179, 1995.

\bibitem{Feldman2015DisparateImpact}
M.~Feldman, S.~A. Friedler, J.~Moeller, C.~Scheidegger, and
  S.~Venkatasubramanian.
\newblock Certifying and removing disparate impact.
\newblock {\em Proceedings of the 21th ACM SIGKDD International Conference on
  Knowledge Discovery and Data Mining}, pages 259--268, 2015.

\bibitem{Fish2016ConfidenceBased}
B.~Fish, J.~Kun, and A.~D. Lelkes.
\newblock A confidence-based approach for balancing fairness and accuracy.
\newblock In {\em Proc. of the {SIAM} International Conference on Data Mining
  (SDM)}, 2016.

\bibitem{Joseph2016Bandits}
M.~Joseph, M.~Kearns, J.~Morgenstern, and A.~Roth.
\newblock Fairness in learning: Classic and contextual bandits.
\newblock In {\em Proc. of the Neural Information Processing Systems (NIPS)},
  2016.

\bibitem{Kamiran09Classifying}
F.~Kamiran and T.~Calders.
\newblock Classifying without discriminating.
\newblock In {\em Proc. of the IEEE International Conference on Computer,
  Control and Communication}, 2009.

\bibitem{Kamishima11Fairness}
T.~Kamishima, S.~Akaho, and J.~Sakuma.
\newblock Fairness aware learning through regularization approach.
\newblock In {\em Proc of. Intl. Conf. on Data Mining}, pages 643--650, 2011.

\bibitem{Kearns1998Noise}
M.~Kearns.
\newblock Efficient noise-tolerant learning from statistical queries.
\newblock {\em Journal of the {ACM}}, 45(6):983--1006, Nov. 1998.

\bibitem{gw}
F.~M{\'e}moli.
\newblock Gromov--wasserstein distances and the metric approach to object
  matching.
\newblock {\em Foundations of computational mathematics}, 11(4):417--487, 2011.

\bibitem{compas}
Northpointe.
\newblock {COMPAS} - the most scientifically advanced risk and needs
  assessments.
\newblock http://www.northpointeinc.com/risk-needs-assessment.

\bibitem{Roemer}
J.~E. Roemer.
\newblock {\em Equality of Opportunity}.
\newblock Harvard University Press, 1998.

\bibitem{Romei13Multidisciplinary}
A.~Romei and S.~Ruggieri.
\newblock A multidisciplinary survey on discrimination analysis.
\newblock {\em The Knowledge Engineering Review}, pages 1--57, April 3 2013.

\bibitem{Ruggieri2014TCloseness}
S.~Ruggieri.
\newblock Using t-closeness anonymity to control for non-discrimination.
\newblock {\em Transactions on Data Privacy}, 7:99--129, 2014.

\bibitem{Santelices10UnfairSAT}
M.~V. Santelices and M.~Wilson.
\newblock Unfair treatment? {The} case of {Freedle}, the {SAT}, and the
  standardization approach to differential item functioning.
\newblock {\em Harvard Educational Review}, 80(1):106--134, April 2010.

\bibitem{eeoc1979}
{The U.S. EEOC}.
\newblock Uniform guidelines on employee selection procedures, March 2, 1979.

\bibitem{zliobaite2015survey}
I.~\v{Z}liobait\.{e}.
\newblock A survey on measuring indirect discrimination in machine learning.
\newblock {\em arXiv preprint arXiv:1511.00148}, 2015.

\bibitem{Zafar2015FairnessConstraints}
M.~B. Zafar, I.~Valera, M.~G. Rogriguez, and K.~P. Gummadi.
\newblock Fairness constraints: A mechanism for fair classification.
\newblock In {\em {ICML} Workshop on Fairness, Accountability, and Transparency
  in Machine Learning (FATML)}, 2015.

\bibitem{icml2013_zemel13}
R.~Zemel, Y.~Wu, K.~Swersky, T.~Pitassi, and C.~Dwork.
\newblock Learning fair representations.
\newblock In {\em Proc. of Intl. Conf. on Machine Learning}, pages 325--333,
  2013.

\end{thebibliography}

\end{document}